\definecolor{pastelyellow}{rgb}{0.99, 0.99, 0.59}
\definecolor{aqua}{rgb}{0.0, 1.0, 1.0} 
\definecolor{aquamarine}{rgb}{0.5, 1.0, 0.83} 
\definecolor{bananayellow}{rgb}{1.0, 0.88, 0.21}
\definecolor{burgundy}{rgb}{0.5, 0.0, 0.13}
\definecolor{ao(english)}{rgb}{0.0, 0.5, 0.0}
\newtheorem{theorem}{Theorem}[section]
\newtheorem{proposition}[theorem]{Proposition}
\newtheorem{lemma}[theorem]{Lemma}
\theoremstyle{definition}
\newtheorem{definition}[theorem]{Definition}
\newtheorem{example}[theorem]{Example}
\newtheorem{remark}[theorem]{Remark}
\def\thmhead@plain#1#2#3{%
  \thmname{#1}\thmnumber{\@ifnotempty{#1}{ }\@upn{#2}}%
  \thmnote{ {\the\thm@notefont#3}}}
\let\thmhead\thmhead@plain
\newcommand{\cC}{\mathcal{C}}
\newcommand{\cF}{\mathcal{F}}
\newcommand{\cG}{\mathcal{G}}
\newcommand{\cS}{\mathcal{S}}
\newcommand{\cU}{\mathcal{U}}
\newcommand{\cV}{\mathcal{V}}
\newcommand{\cW}{\mathcal{W}}
\newcommand{\cX}{\mathcal{X}}
\newcommand{\rsp}[1]{{\mathrm{rowsp}{#1}}}
\newcommand{\bbF}{{\mathbb F}} 
\renewcommand{\geq}{\geqslant}
\renewcommand{\leq}{\leqslant}
\begin{document}

\renewcommand{\headrulewidth}{0pt}

\rhead{ }
\chead{\scriptsize Flag Codes from Planar Spreads in Network Coding}
\lhead{ }

\title{Flag Codes from Planar Spreads \\ in Network Coding
\renewcommand\thefootnote{\arabic{footnote}}
}

\author{\renewcommand\thefootnote{\arabic{footnote}}
Clementa Alonso-Gonz\'alez\footnotemark[1],\, Miguel \'Angel Navarro-P\'erez\footnotemark[1], \\  
\renewcommand\thefootnote{\arabic{footnote}}
 Xaro Soler-Escriv\`a\footnotemark[1]}

\footnotetext[1]{Dpt.\ de Matem\`atiques, Universitat d'Alacant, 
Sant Vicent del Raspeig, Ap.\ Correus 99, E -- 03080 Alacant. \\ E-mail adresses: \texttt{clementa.alonso@ua.es, miguelangel.np@ua.es, xaro.soler@ua.es.}}


\maketitle

\date

\begin{abstract}
In this paper we study a class of multishot network codes given by families of nested subspaces (flags) of a vector space $\bbF_q^n$, being $q$ a prime power and $\bbF_q$ the finite field of $q$ elements. In particular, we focus on flag codes having maximum minimum distance (\emph{optimum distance flag codes}). We explore the existence of these codes from spreads, based on the good properties of the latter ones. For $n=2k$, we show that optimum distance full flag codes with the largest size are exactly those that can be constructed from a planar spread. We give a precise construction of them as well as a decoding algorithm.
\end{abstract}

\textbf{Keywords:} Network coding, subspace codes, projective space, spreads, flag codes.


\section{Introduction}\label{sec:Introduction}

The concept of {\em Network Coding}, first introduced in \cite{AhlsCai00}, describes a method for attaining a maximum information flow within a network that is an acyclic directed graph with possibly several sources and sinks. It was proved  in \cite{AhlsCai00} that the information rate of a network can be improved by using coding at the nodes of the network, instead of simply routing the received inputs. An algebraic approach to coding in non-coherent networks, called {\em random network coding}, was given by Koetter and Kschischang in \cite{KoetKschi08}. Given a finite field $\bbF_q$, the authors defined the {\em subspace channel} as a discrete memoryless channel with input and output alphabets given by the collection of all possible vector subspaces of $\bbF_q^n$, that is, $\mathcal{P}_q(n)$. The source node transmits an input subspace vector, which is processed in the intermediate nodes and, in the absence of errors, the sink nodes receive the same subspace. In order to correct possible errors or erasures that may happen during the transmission, one can limit the choice of input subspaces to a particular subset of the projective space called {\em subspace code} \cite{KoetKschi08}. The study of subspace codes has led to many papers in recent years (see for instance \cite{TrautRosen18} and the references therein). Most of these articles focus on {\em constant dimension codes}, that is, subspace codes in which all subspaces have the same dimension. An important family of constant dimension codes is the one of {\em spread codes}, which have maximal minimum distance and achieve the largest possible size \cite{MangaGorlaRosen08, MangaTraut14}.

When we use the subspace channel more than once, we talk about {\em multishot subspace codes}. In this kind of codes, introduced in \cite{NobUcho09}, the subspace channel is used many times, in order to transmit sequences of subspaces. As it was explained in that paper, multishot subspace codes appear as an interesting alternative to subspace codes ({\em one-shot subspace codes}) when the field size $q$ or the packet size $n$ can not be increased. Moreover, even if these parameters are acceptable for our communication channel, multishot subspace codes can be useful for solving complexity problems (one-shot codes in $\mathcal{P}_q(nr)$ can be considerably more complicated than {\em $r$-shot} codes over $\mathcal{P}_q(n)$ \cite{NobUcho09}). 
As a particular case, here we consider multishot subspace codes given by sequences of nested subspaces, that is, {\em flag codes}. In the network coding setting, flag codes were first introduced in \cite{LiebNebeVaz18}, as a generalization of constant dimension codes, and a  model of network channel for flags was given. 

In this paper we characterize flag codes having maximum distance  in terms of the constant dimension codes used at each shot. We call them {\em optimum distance flag codes}. Motivated by the good properties of spreads as constant dimension codes, we focus on flag codes than can be constructed from some spread. Moreover, for $n=2k$ we prove that the optimum distance full flag codes with the best possible size are exactly the ones having a {\em planar spread} as the constant dimension code used at the $k$-th shot. We give a precise construction of such flag codes together with a decoding algorithm on the erasure channel. 

It was suggested in \cite{NobUcho10}  that the existence of dependencies among the transmitted subspaces in each sequence may improve the error-correction capability of a multishot code. We prove that this is true for the flag codes given by our construction. 

The structure of the paper is as follows. In Section 2 we give some basic background on constant dimension codes (mainly equidistant codes, partial spreads and spreads) together with some notions on multishot codes. Section 3 is devoted to the study of flag codes focusing on their distance properties. We provide a bound for optimum distance flag codes and we give a characterization of them. In Section 4 we explore how to get optimum distance flag codes with the largest possible size. We discuss if it is possible to get optimum distance flag codes from  $k$-spreads (for some divisor $k$ of $n$). We conclude that for full flag codes this is only possible starting from planar spreads. Next we give a concrete construction of an optimum distance full flag code with the largest size in $\bbF_q^{2k}$ and we develop a decoding process for our code on the erasure channel. We finish Section 4 with some constructions of general type flag codes closely related to our construction for full flag codes. 



\section{Preliminaries}\label{sec:Preliminaries}

Let $q$ be a prime power and $\bbF_q$ the field with $q$ elements. Fix an integer $n>1$ and consider $\mathcal{P}_q(n)$ the set of all vector subspaces of $\bbF_q^n$, i.e., the {\em projective geometry} of $\bbF_q^n$. The natural measure of distance in  $\mathcal{P}_q(n)$ is given by
$$
d_S(\mathcal{U}, \mathcal{V})= \dim(\mathcal{U}+\mathcal{V})-\dim(\mathcal{U} \cap \mathcal{V})
$$
for all $\mathcal{U}, \mathcal{V} \in \mathcal{P}_q(n)$. It is called the {\em subspace distance} between $\mathcal{U}$ and $\mathcal{V}$. 

 A {\em subspace code} is defined to be a subset $\mathcal{C} \subseteq \mathcal{P}_q(n)$ with at least two elements. In this context, the minimum distance of a subspace code $\mathcal{C}$ is the value 
 \[
 d_S(\mathcal{C})=min\{d_S(\cU,\cV) \ | \ \cU, \cV \in \mathcal{C}, \, \cU\neq \cV\}.
 \]   
If all the elements in $\mathcal{C}$ have the same dimension, say $k$, with $1 \leq k <n$, the code $\mathcal{C}$ is called a {\em constant dimension code}. In this case $\mathcal{C}$ is a code in the Grassmannian $\mathcal{G}_q(k,n)$, that is, the set of all $k$-dimensional vector subspaces of $\bbF_q^n$. The reader is referred to \cite{KoetKschi08} for the basic background on subspace codes.
 
 As in classical Coding Theory, one of the main problems when working with subspace codes is the search for optimal codes with the largest size given a minimum distance or optimal codes with the largest minimum distance given a size.
 Let us recall some important concepts related with this problem which will be used in the sequel.


%
%

\subsection{Equidistant codes, partial spreads and spreads}\label{subsec:equidistant}


A constant dimension code $\cC \subset \mathcal{G}_q(k,n)$ is {\em equidistant} if the distance between any two distinct codewords is equal to a given value. Hence, it is satisfied that $d_S(\cC)=d_S(\mathcal{U}, \mathcal{V})$ for all $\mathcal{U}, \mathcal{V} \in \mathcal{C}$ with $\mathcal{U}\neq \mathcal{V}$. In particular, the intersection between any two different codewords of $\cC$ has a fixed dimension $c$ with $d_S(\cC)=2(k-c)$. In this case, we say that $\mathcal{C}$ is an {\em equidistant $c$-intersecting} constant dimension code. Note that the condition $n \geq 2k-c$ is necessary for the existence of that codes. Equidistant subspace codes in the Grassmannian were introduced for the first time in \cite{EtzRa15}. In this paper important examples of families of equidistant codes are described. In \cite{GoRava15} the authors give an almost complete classification of such codes in ground fields with large cardinality. 

A code $\cC\subseteq \mathcal{G}_q(k,n)$ with $d_S(\cC)=2k$ is called a {\em partial spread} code. In particular, a partial spread code is an equidistant $0$-intersecting constant dimension code ($n\geq 2k$). A systematic construction of partial spreads with efficient decoding algorithms can be found in \cite{GoRava14}. 

Note that a partial spread code attains the maximum subspace distance. Concerning the size of such codes, in  \cite{GoRava14} we can find the following result: if $\cC\subseteq \mathcal{G}_q(k,n)$ is a partial spread then 
\begin{equation}\label{eq:cota partial spread}
|\cC|\leq \left\lfloor\frac{q^n-1}{q^{k}-1}\right\rfloor.
\end{equation}
When $k$ divides $n$ this upper bound is always attained by codes that are called {\em spread codes} (or {\em $k$-spread codes}). It follows that $k$-spread codes are optimal codes with minimum distance $2k$. See \cite{MangaGorlaRosen08} and references inside for more details on spreads. 



%
%

\subsection{Multishot Codes}\label{subsec:multishot and flag}

The codes described in the previous section can be considered as examples of {\em one-shot subspace codes}, since they use the subspace channel just once. In contrast, the subspace channel can be used many times giving rise to the so-called {\em multishot subspace codes} introduced in \cite{NobUcho09}. This kind of codes appears as an alternative to one-shot subspace coding, specially when it is not possible to modify neither the field size $q$ nor the packet size $n$. Multishot subspace coding introduces a new interesting parameter in order to find codes with good rates: the number of channel uses. Moreover, as pointed out in \cite{NobUcho10}, we can obtain codes with better error-correction capabilities by spreading redundancy across multiple shots.

 A {\em multishot subspace code of length $r$} (also called an {\em $r$-shot subspace code}) over $\mathcal{P}_q(n)$, is a non-empty subset of $\mathcal{P}_q(n)^r$, that is, the $r$-th Cartesian power of the projective space. The {\em extended subspace distance} between two elements ${\mathcal{U}}=(\mathcal{U}_1,\ldots,  \mathcal{U}_r)$ and $\mathcal{V}=(\mathcal{V}_1,\ldots,  \mathcal{V}_r)$ of $\mathcal{P}_q(n)^r$,  is defined by
\begin{equation}\label{eq:distext}
d_S(\mathcal{U}, \mathcal{V})= \sum_{i=1}^r d_S(\mathcal{U}_i, \mathcal{V}_i).
\end{equation}

In \cite{NobUcho09}, the subspace dimension at each shot is unfixed and no relationship with previous shots is imposed.  However, creating dependencies among the transmitted codewords of different shots can improve the error-correction capabilities (see \cite{NobUcho10}). In this paper we explore the use of {\em multishot constant dimension codes} given by {\em nested subspaces (flags)}, that is, at each shot the dimension of the transmitted subspace is fixed and it must contain the subspace sent at the previous shot. Let us precise this idea in the following section. 


\section{On flag codes}\label{sec:on flag codes}

A {\em flag} of type $(t_1, \ldots, t_r)$, with $0<t_1<\cdots <t_r<n$, on the vector space $\mathbb{F}_q^n$ is an element $\mathcal{F}=(\mathcal{F}_1,\ldots,  \mathcal{F}_r)$ of $\mathcal{G}_q(t_1,n) \times \cdots \times \mathcal{G}_q(t_r,n) \subseteq \mathcal{P}_q(n)^r$ such that

\[
0\subsetneq \mathcal{F}_1 \subsetneq \cdots \subsetneq \mathcal{F}_r \subsetneq \mathbb{F}_q^n
\]
and $\dim(\mathcal{F}_i)=t_i$, for all $i=1, \ldots, r$. In case  the type vector is $(1, 2, \ldots, n-1)$ we say that ${\cF}$ is a {\em full flag}. Given a flag  $\mathcal{F}=(\mathcal{F}_1,\ldots,  \mathcal{F}_r)$ of type $(t_1, \ldots, t_r)$, we say that $\mathcal{F}_i$ is its {\em $i$-th subspace}.
The space of flags of type $(t_1, \ldots, t_r)$ on $\mathbb{F}_q^n$ is denoted by $\mathcal{F}_q((t_1,\ldots, t_r),n)$. In this context we can give the following definition:

\begin{definition}
A {\em flag code of type $(t_1,\ldots,t_r)$} on the vector space $\bbF_q^n$ is a subset ${\cal C}\subseteq \mathcal{F}_q((t_1,\ldots, t_r),n)$ with $|\cC|\geq 2$. 
\end{definition}

 As  a subset of $\mathcal{P}_q(n)^r$, the space $\mathcal{F}_q((t_1,\ldots, t_r),n)$ can be naturally endowed with the extended subspace distance given in (\ref{eq:distext}). We will denote it by $d_f$ in the flag codes setting.

\begin{definition}\label{distflags}
Given  $\cF=(\mathcal{F}_1,\ldots,  \mathcal{F}_r)$ and $\cF'=(\mathcal{F}'_1,\ldots,  \mathcal{F}'_r)$ two flags in $\mathcal{F}_q( (t_1, \ldots, t_r),n)$, the {\em flag distance} between $\cF$ and $\cF'$ is
\[
d_f(\cF,\cF')= \sum_{i=1}^r d_S(\mathcal{F}_i, \mathcal{F}'_i),
\]
where $d_S$ denotes the subspace distance. 
The {\em minimum distance} of a flag code $\cC$ of type  $(t_1, \ldots, t_r)$ is given by
\[
d_f(\cC)=\min\{d_f(\cF,\cF')\ |\ \cF,\cF'\in\cC, \ \cF\neq \cF'\}.
\]
\end{definition}

\begin{remark}
Observe that a flag code of type $(t_1, \ldots, t_r)$ is, in particular, a multishot constant dimension code of length $r$. Besides, a flag of type $(t_1)$ on $\mathbb{F}_q^n$ is just a vector space of dimension $t_1$ of $\mathbb{F}_q^n$ and the flag space $\mathcal{F}_q((t_1),n)$ coincides with the Grassmannian $\mathcal{G}_q(t_1,n)$. In this sense, flag codes generalize subspace codes and the flag distance is also a generalization of the subspace distance defined over the Grassmannian. 
\end{remark}

Just as constant dimension codes in $\mathcal{G}_q(k,n)$ have minimum distance upper-bounded by the value $\min\left\lbrace2k, 2(n-k)\right\rbrace$, we can also give an upper bound for flag codes of type  $(t_1, \ldots, t_r)$. To do so, take into account that given a pair of flags $\cF=(\mathcal{F}_1,\ldots,  \mathcal{F}_r)$ and $\cF'=(\mathcal{F}'_1,\ldots,  \mathcal{F}'_r)$, both of type  $(t_1, \ldots, t_r)$, for each $i\in\{1,\ldots,r\}$, it holds that

\begin{equation}\label{eq:2ti}
d_S(\cF_i,\cF'_i) \leq 2t_i , \mbox{ if } 2t_i\leq n 
\end{equation}
and 
\begin{equation}\label{eq:2(n-ti)}
d_S(\cF_i,\cF'_i) \leq 2(n-t_i), \mbox{ if } 2t_i>n.
\end{equation}

Next result follows straightforwardly. 

\begin{lemma}\label{lem:Distancia maxima flags}
Given a flag code $\cC$ ot type $(t_1,\ldots,t_r)$ on $\bbF_q^n$, we have that 
\begin{equation}\label{eq:quotamaxdistflag}
d_f(\mathcal{C}) \leq 2\left(\sum_{t_i \leq \lfloor \frac{n}{2}\rfloor} t_i + \sum_{t_i > \lfloor \frac{n}{2}\rfloor} (n-t_i)\right).
\end{equation}
In particular, when $\cC$ is a full flag code, $(\ref{eq:quotamaxdistflag})$ becomes 
\[
d_f(\mathcal{C})\leq 
\left\lbrace
\begin{array}{lccc}
\dfrac{n^2}{2}, & \text{for} & n & \text{even}, \\ 
                &            &   &         \\
\dfrac{n^2-1}{2}, & \text{for} & n & \text{odd}.
\end{array}
\right.
\]
\end{lemma}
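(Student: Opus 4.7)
The inequality (\ref{eq:quotamaxdistflag}) should follow immediately from the two per-shot bounds (\ref{eq:2ti}) and (\ref{eq:2(n-ti)}) already recorded. The plan is: pick any two distinct flags $\cF=(\cF_1,\ldots,\cF_r)$ and $\cF'=(\cF'_1,\ldots,\cF'_r)$ in $\cC$ and, by Definition \ref{distflags}, write $d_f(\cF,\cF')$ as the sum $\sum_{i=1}^r d_S(\cF_i,\cF'_i)$. For each index $i$, apply (\ref{eq:2ti}) when $t_i\leq\lfloor n/2\rfloor$ and (\ref{eq:2(n-ti)}) when $t_i>\lfloor n/2\rfloor$; summing these per-shot inequalities reproduces exactly the right-hand side of (\ref{eq:quotamaxdistflag}). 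Taking the minimum over all pairs preserves the bound.

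A tiny point of care is the borderline $2t_i=n$ (possible only for $n$ even, $t_i=n/2$), where both (\ref{eq:2ti}) and (\ref{eq:2(n-ti)}) give the same value, so the assignment of that index to either of the two sums in (\ref{eq:quotamaxdistflag}) is immaterial. This is the only subtlety; there is no real obstacle, only bookkeeping.

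For the full flag specialization I would plug in $t_i=i$ for $i=1,\ldots,n-1$ and split according to the parity of $n$. When $n=2m$ is even, the indices with $t_i\leq m$ are $i=1,\ldots,m$ and those with $t_i>m$ are $i=m+1,\ldots,2m-1$, so (\ref{eq:quotamaxdistflag}) specializes to
\[
2\left(\sum_{i=1}^{m} i+\sum_{i=m+1}^{2m-1}(2m-i)\right)=2\left(\tfrac{m(m+1)}{2}+\tfrac{m(m-1)}{2}\right)=2m^2=\tfrac{n^2}{2}.
\]
When $n=2m+1$ is odd, $\lfloor n/2\rfloor=m$, the indices split as $i=1,\ldots,m$ and $i=m+1,\ldots,2m$, and a symmetric computation gives $2m(m+1)=\tfrac{n^2-1}{2}$. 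Both cases reduce to a single arithmetic progression identity, so the argument is routine once the per-shot bounds are invoked.
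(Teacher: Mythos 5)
Your argument is exactly the one the paper intends: the paper merely says the lemma ``follows straightforwardly'' from the per-shot bounds (\ref{eq:2ti}) and (\ref{eq:2(n-ti)}), and your summation-plus-minimum-over-pairs step, together with the arithmetic specialization for full flags (both parities verified correctly), fills in that gap faithfully. No discrepancy with the paper's approach.
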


\subsection{Flag codes with maximum distance}

%
We are interested in the family of flag codes on $\bbF_q^n$ that attain the bound given in  (\ref{eq:quotamaxdistflag}).

\begin{definition}
We say that a flag code $\cC$ ot type $(t_1,\ldots,t_r)$ is an {\em optimum distance flag  code}  if it attains the maximum possible distance for flag codes of type $(t_1, \ldots, t_r)$ given by (\ref{eq:quotamaxdistflag}).
\end{definition}

To deepen in the study of optimum distance flag codes we consider some special constant dimension codes that can be associated to a given flag code in a natural way.

\begin{definition}\label{def:Ci}
Consider  a flag code $\cC$ of type $(t_1,\ldots,t_r)$ and take an index $i \in \{1,..,r\}$. We call the {\em $i$-projected code of $\mathcal{C}$}  to the subspace code $\mathcal{C}_i$ given by the set of all the $i$-th subspaces of flags in $\mathcal{C}$. More precisely,
\[
\mathcal{C}_i= \left\lbrace \mathcal{V} \in \mathcal{G}_q(t_i,n)  \ | \  \mathcal{V}=\cF_i \ \text{for some} \ \cF=(\cF_1,\ldots,\cF_r) \in \mathcal{C}\right\rbrace. 
\]
\end{definition}

For each $i\in\{1, \ldots, r\}$, we have that $\mathcal{C}_i$ is a constant dimension code in $\mathcal{G}_q(t_i,n)$ of cardinality $\vert \cC_i\vert \leq \vert \cC \vert$. Notice that it is satisfied that  $d_S(\cC_i)\geq 0$ and, $d_S(\cC_i)=0$ if and only if $|\cC_i|=1$. 

 Our aim is to determine if a given flag code is an optimum distance flag code in terms of properties of its projected codes. In particular, an optimum distance flag code of type $(t_1)$ is a constant dimension code in the Grassmannian $\cG_q(t_1,n)$ with maximum subspace distance. So, optimum distance flag codes also generalize maximum distance constant dimension codes.

\begin{proposition}\label{prop:MFDprojected}
Let $\cC$ be an optimum distance flag code of type $(t_1, \ldots, t_r)$ on $\bbF_q^n$. Then all its projected codes attain their maximum possible subspace distance, that is, $d_S(\cC_i)=\min\left\lbrace 2t_i, 2(n-t_i)\right\rbrace.$
\end{proposition}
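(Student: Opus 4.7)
The plan is to exploit the additive structure of the flag distance: the bound (\ref{eq:quotamaxdistflag}) is exactly the sum of the per-coordinate bounds (\ref{eq:2ti})–(\ref{eq:2(n-ti)}), so saturating the flag bound forces saturation in every summand and for every pair of codewords.

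More precisely, I would first rewrite the right-hand side of (\ref{eq:quotamaxdistflag}) as
\[
2\!\!\sum_{t_i\leq \lfloor n/2\rfloor}\!\! t_i + 2\!\!\sum_{t_i>\lfloor n/2\rfloor}\!\!(n-t_i)=\sum_{i=1}^r \min\{2t_i,\,2(n-t_i)\},
\]
and call this quantity $M$. Now fix two distinct flags $\cF,\cF'\in\cC$. On the one hand, since $\cC$ is an optimum distance flag code, $d_f(\cF,\cF')\geq d_f(\cC)=M$. On the other hand, using (\ref{eq:2ti}) and (\ref{eq:2(n-ti)}) term by term in the definition of $d_f$,
\[
d_f(\cF,\cF')=\sum_{i=1}^r d_S(\cF_i,\cF'_i)\leq \sum_{i=1}^r \min\{2t_i,\,2(n-t_i)\}=M.
\]

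Combining both inequalities yields equality throughout, and since each summand is bounded above by $\min\{2t_i,2(n-t_i)\}$, the only way their sum equals $M$ is that, for every index $i\in\{1,\ldots,r\}$,
\[
d_S(\cF_i,\cF'_i)=\min\{2t_i,\,2(n-t_i)\}.
\]
This holds for \emph{every} pair of distinct flags in $\cC$. In particular, for each $i$ the value $d_S(\cF_i,\cF'_i)$ is strictly positive, so $\cF_i\neq \cF'_i$ whenever $\cF\neq \cF'$; hence the projection map $\cC\to\cC_i$ is injective and every two distinct elements of $\cC_i$ arise as the $i$-th subspaces of two distinct flags, at subspace distance exactly $\min\{2t_i,2(n-t_i)\}$. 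Therefore $d_S(\cC_i)=\min\{2t_i,2(n-t_i)\}$, which is the largest possible subspace distance in $\cG_q(t_i,n)$.

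There is no real obstacle here: the argument is the standard ``if a sum equals the sum of the upper bounds on each term, each term hits its bound'' principle, once one notices that the maximum flag distance in Lemma~\ref{lem:Distancia maxima flags} is precisely the sum of the maximum subspace distances in the Grassmannians $\cG_q(t_i,n)$. The only small point worth being explicit about is that the conclusion must be checked for every pair of distinct flags (not merely for a pair realising the minimum distance), which is immediate because the minimum distance already equals $M$ and each summand is bounded by its individual maximum.
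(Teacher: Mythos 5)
Your proof is correct and rests on the same key observation as the paper's: the right-hand side of (\ref{eq:quotamaxdistflag}) is exactly $\sum_i \min\{2t_i,2(n-t_i)\}$, so saturation of the flag bound forces every summand to reach its own coordinatewise bound. The paper phrases this as a contradiction (a deficient $\cC_i$ yields a pair of flags whose flag distance falls short of the bound), whereas you argue directly for every pair of distinct flags and also extract injectivity of the projections as a byproduct; the underlying extremal argument is identical.
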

\begin{proof}
Assume that $\cC$ is an optimum distance flag code such that the $i$-projected code $\cC_i$ has subspace distance $d_S(\cC_i) < \min\left\lbrace 2t_i, 2(n-t_i)\right\rbrace$ for some index $1\leq i \leq r$. Then, there exist different flags $\cF, \cF' \in \cC$ such that the value $d_S(\cF_i, \cF'_i)$ does not attain the bounds (\ref{eq:2ti}) or (\ref{eq:2(n-ti)}) and, consequently, the bound given in (\ref{eq:quotamaxdistflag}) cannot be attained.
\end{proof}
\begin{remark}
The previous result states which kind of constant dimension codes can play an important role to provide families of optimum distance flag codes from its projected codes: partial spreads of dimension up to $\lfloor \frac{n}{2}\rfloor$ or equidistant codes with minimum possible subspace intersection for higher dimensions. In other words, if $\cC$ is an optimum distance flag code of type $(t_1, \ldots, t_r)$ and $\cC_i$ is its $i$-projected code, then $\cC_i$ must be a $c_i$-intersecting constant dimension code of dimension $t_i$, where $c_i=\max\left\lbrace 0, 2t_i-n\right\rbrace$.
\end{remark}

Notice that, in general, the converse of Proposition \ref{prop:MFDprojected} is not true: a flag code having maximum distance constant dimension codes as projected codes does not have to be necessarily an optimum distance flag code. The following example reflects this situation.

\begin{example}\label{ex:Fq5}
Consider the standard basis $\{e_1,...,e_5\}$ of the vector space $\bbF_q^5$. Let $\cC$ be the flag code of type $(1,3)$ on $\bbF_q^5$ consisting of the flags:
$$
\begin{array}{cccl}
\mathcal{F}^1 &=& (\left\langle e_1 \right\rangle, & \left\langle e_1, e_2, e_3 \right\rangle),\\
\mathcal{F}^2 &=& (\left\langle e_4 \right\rangle, & \left\langle e_1, e_4, e_5 \right\rangle),\\
\mathcal{F}^3 &=& (\left\langle e_1 \right\rangle, & \left\langle e_1, e_2+e_4, e_3+e_5 \right\rangle).\\
\end{array}
$$
This flag code has two projected codes with maximum subspace distance: a partial spread $\cC_1 \in \cG_q(1, 5)$ and a $1$-intersecting code $\cC_2\subset \cG_q(3,5)$. Nevertheless, the distance of $\cC$ is $d_f(\cC)=d_f(\cF^1, \cF^3)=4$ while, by means of (\ref{eq:quotamaxdistflag}),  the distance of any optimum distance flag code of type $(1,3)$ on $\bbF_q^5$ has to be equal to 6.
\end{example}

Observe that in Example \ref{ex:Fq5} the minimum distance of the flag code $\cC$ is attained between two flags with the same first subspace. Hence, despite of having a flag code $\cC$ with maximum distance projected codes, if two flags in $\cC$ share a subspace, the code $\cC$ cannot have optimum distance. At this point, we introduce a family of flag codes in which different flags have no common subspaces.

\begin{definition}
Given a flag code $\cC$ ot type $(t_1, \ldots, t_r)$, we say that $\cC$ is  a {\em disjoint flag code} if  $\vert \mathcal{C}_1\vert =\cdots=\vert\mathcal{C}_r\vert=\vert\mathcal{C}\vert,$ where $\mathcal{C}_1,...,\mathcal{C}_r$ are the projected subspace codes of $\mathcal{C}$.
\end{definition}

Next result gives a characterization of optimum distance flag codes.

\begin{theorem}\label{theo:equidistant MFD}
Let $\cC$ be a flag code of type $(t_1, \ldots, t_r)$. The following statements are equivalent:
\begin{enumerate}
\item[(i)] The code $\cC$ is an optimum distance flag code.
\item[(ii)] The code $\cC$ is disjoint and every projected code $\cC_i$ attains the maximum possible subspace distance.
\end{enumerate}
\end{theorem}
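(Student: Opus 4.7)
The plan is to prove the two implications separately, leaning on Proposition~\ref{prop:MFDprojected} for half of the forward direction, and on the bounds (\ref{eq:2ti})--(\ref{eq:2(n-ti)}) combined with the definition of subspace distance of the projected codes for the reverse direction.

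For (i)$\Rightarrow$(ii), Proposition~\ref{prop:MFDprojected} already yields that every projected code $\cC_i$ attains its maximum possible subspace distance $\min\{2t_i,2(n-t_i)\}$. It remains to show disjointness. I would argue by contradiction: if $\cC$ is not disjoint, then some projected code $\cC_j$ has $|\cC_j|<|\cC|$, so there exist distinct flags $\cF,\cF'\in\cC$ with $\cF_j=\cF'_j$, hence $d_S(\cF_j,\cF'_j)=0$. Using (\ref{eq:2ti}) and (\ref{eq:2(n-ti)}) to bound the remaining summands of $d_f(\cF,\cF')$, one gets
\[
d_f(\cF,\cF')\leq \sum_{i\neq j}\min\{2t_i,2(n-t_i)\} < 2\left(\sum_{t_i\leq \lfloor n/2\rfloor}t_i+\sum_{t_i>\lfloor n/2\rfloor}(n-t_i)\right),
\]
so $d_f(\cC)$ is strictly below the bound (\ref{eq:quotamaxdistflag}), contradicting the fact that $\cC$ is an optimum distance flag code.

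For (ii)$\Rightarrow$(i), I would pick any pair of distinct flags $\cF,\cF'\in\cC$. Disjointness ensures $\cF_i\neq\cF'_i$ for all $i$, so $\cF_i,\cF'_i$ are two distinct codewords of $\cC_i$, giving
\[
d_S(\cF_i,\cF'_i)\geq d_S(\cC_i)=\min\{2t_i,2(n-t_i)\}.
\]
Since (\ref{eq:2ti}) and (\ref{eq:2(n-ti)}) provide the reverse inequality, equality holds termwise. Summing over $i$ shows that $d_f(\cF,\cF')$ equals the bound in (\ref{eq:quotamaxdistflag}) for every pair of distinct flags. In particular $d_f(\cC)$ meets the bound, so $\cC$ is an optimum distance flag code (and in fact equidistant in the flag-distance sense).

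I do not expect any real obstacle here: the content is essentially book-keeping around the decomposition of the flag distance as a sum of subspace distances. The only place to be slightly careful is the forward direction, where one must not confuse ``maximum distance of each projected code'' with ``optimum distance of the flag code'': Example~\ref{ex:Fq5} shows precisely that the first condition is not sufficient, which is exactly why the disjointness hypothesis has to be extracted from $d_f(\cC)$ hitting the bound.
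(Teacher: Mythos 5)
Your proof is correct and follows essentially the same route as the paper's: Proposition~\ref{prop:MFDprojected} plus a contradiction from a shared $j$-th subspace for (i)$\Rightarrow$(ii), and termwise comparison of $d_S(\cF_i,\cF'_i)$ against $d_S(\cC_i)$ and the bounds (\ref{eq:2ti})--(\ref{eq:2(n-ti)}) for (ii)$\Rightarrow$(i). Your write-up of the reverse direction is, if anything, slightly more explicit than the paper's in pinning down why $d_S(\cF_i,\cF'_i)$ must equal $d_S(\cC_i)$, and your closing observation that $\cC$ is in fact equidistant in the flag metric is a correct bonus remark.
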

\begin{proof}
$(i)\Rightarrow (ii).$ Let $\cC$ be an optimum distance flag code. By Proposition \ref{prop:MFDprojected}, every projected code of $\cC$ attains the maximum possible subspace distance. Assume that $\cC$ is not disjoint. Then, there exists some index $j \in\{1,\ldots, r\}$ with  $\vert \mathcal{C}_j\vert < \vert \mathcal{C}\vert$ and at least two different flags $\mathcal{F}, \mathcal{F}' \in \mathcal{C}$ such that $\mathcal{F}_j=\mathcal{F}'_j$. Thus
\[
d_f(\mathcal{C})\leq d_f(\mathcal{F},\mathcal{F}') = \sum_{i\neq j} d_S(\mathcal{F}_i,\mathcal{F}'_i).
\]
By means of the bounds provided in ($\ref{eq:2ti}$) and ($\ref{eq:2(n-ti)}$), we have that $d_f(\cC)$ cannot be the maximum distance and then, the flag code $\cC$ cannot be an optimum distance flag code.

$(ii)\Rightarrow (i)$. Assume that $(ii)$ is true. Since $\cC$ is disjoint, given any pair of different flags $\cF$ and $\cF'$ in $\cC$,  for every $i=1, \ldots, r$, the subspaces $\cF_i$ and $\cF'_i$ are different and $d_S(\cF_i, \cF'_i)=d_S(\cC_i)$ is the maximum possible distance between $t_i$-dimensional subspaces. Hence, $d_f(\cC)$ attains the upper bound given in  (\ref{eq:quotamaxdistflag}) and $\cC$ is an optimum distance flag code.
\end{proof}

Taking into account this result, in the following section we propose a full flag code construction that provides a family of optimum distance flag codes with the largest possible size.


 \section{Optimum distance flag codes from spreads}\label{sec:MFDfromSpreads}

In Section \ref{sec:on flag codes} we have proved that the projected codes of an optimum distance flag code $\cC$ of type $(t_1,...,t_r)$  on $\bbF_q^n$ have to be maximum distance constant dimension codes. In particular, projected codes of dimension up to $\lfloor \frac{n}{2}\rfloor$ must be partial spreads. Moreover, if there is any dimension $t_i$ that divides $n$, then $t_i \leq \lfloor \frac{n}{2}\rfloor$, and we have that $\cC_i$ has to be a partial spread.

Recall that spread codes are codes with optimal cardinality for maximal error correction capability, that is, optimal partial spreads. Furthermore, $t_i$-spreads of $\bbF_q^n$ always exist whenever $t_i$ is a divisor of $n$ (see \cite{MangaGorlaRosen08},\cite{MangaTraut14}).

In the search of families of optimum distance flag codes with the largest possible size, it is quite natural to look for optimum distance flag codes having a spread as $i$-projected code for $t_i$ a divisor of $n$. We begin this section focusing on this question for optimum distance full flag codes. 





\bigskip

By Theorem \ref{theo:equidistant MFD}, any optimum distance flag code has to be disjoint. Then, by cardinality, we can have one spread code at most among its projected codes. In the following result we determine which dimensions $i\in \{1,\ldots, n-1\}$ could be admissible to get $\cC$ an optimum distance full flag code such that $\cC_i$ is a spread code.

\begin{proposition}\label{prop:MFD_full_flag_implies_n=2k}
Let $\cC$ be an optimum distance full flag code on $\bbF_q^n$ such that $n=ks\geq 2$. Assume that its $k$-projected code $\cC_k$ is a $k$-spread.  Then, if $n\neq 3$, we have that $s=2$.
\end{proposition}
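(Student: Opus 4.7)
The plan is to derive a contradiction from $s\geq 3$ by confronting the size forced on $\cC_k$ with the size forced on the next projected code $\cC_{k+1}$.

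First, by Theorem~\ref{theo:equidistant MFD}, the optimum-distance hypothesis makes $\cC$ disjoint, so every projected code has size $|\cC|$. Since $\cC_k$ is a $k$-spread, this common size is exactly $\frac{q^n-1}{q^k-1}$.

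Next, I would move one dimension up and examine $\cC_{k+1}$. By Proposition~\ref{prop:MFDprojected}, it attains the maximum subspace distance in $\cG_q(k+1, n)$. Provided $2(k+1)\leq n$, that maximum equals $2(k+1)$, which forces $\cC_{k+1}$ to be a $0$-intersecting constant-dimension code, i.e., a partial spread of $(k+1)$-subspaces. The inequality $2(k+1)\leq ks$ rearranges to $k(s-2)\geq 2$, and this holds in every case $s\geq 3$ except $s=3,\, k=1$, which is precisely the excluded case $n=3$. The same hypothesis $s\geq 3$ also yields $k+1\leq n-1$, so $\cC_{k+1}$ is a genuine projected code of the full flag.

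Applying the partial-spread bound~(\ref{eq:cota partial spread}) then yields
\[
\frac{q^n-1}{q^k-1} = |\cC_{k+1}| \leq \left\lfloor \frac{q^n-1}{q^{k+1}-1}\right\rfloor < \frac{q^n-1}{q^k-1},
\]
a contradiction. Hence $s\leq 2$, and since $k\leq n-1$ forces $s\geq 2$, we conclude $s=2$. I do not expect any step to be hard: the only delicate point is verifying the inequality $2(k+1)\leq n$ over all admissible $(k,s)$, and the exclusion of $n=3$ in the statement is calibrated exactly to rule out the single offending pair $(k,s)=(1,3)$.
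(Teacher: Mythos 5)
Your proof is correct and follows essentially the same route as the paper's: use disjointness and the spread size to fix $|\cC_{k+1}|$, verify $2(k+1)\leq n$ under $s\geq 3$ (with the $(k,s)=(1,3)$ case, i.e.\ $n=3$, as the sole exception), and then contradict the partial-spread bound~(\ref{eq:cota partial spread}). The only surface differences are that the paper isolates $n=2$ as a trivial base case and organizes the verification of $k(s-2)\geq 2$ by splitting into $k=1$ and $k\geq 2$, while you state the exceptional pair directly; these are cosmetic.
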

\begin{proof}
If $n=2$, the result follows straightforwardly since a full flag code on $\bbF_q^2$ is just a subspace code in $\cG_q(1,2)$. 

Now, assume that $n=ks\geq 4$. Since $k<n$, it follows that $s>1$. Arguing by contradiction, suppose $s>2$. Let us see that, in this case, $k+1 \leq \lfloor \frac{n}{2}\rfloor$. Note that this is equivalent to show that $2(k+1) = 2k+2 \leq ks=n$, that is, $ (s-2)k\geq 2$. In case $k=1$, we have $(s-2)k=s-2=n-2 \geq 2,$ since $n\geq 4$. If $k\geq 2$, since $s-2\geq 1$ we also obtain $ (s-2)k\geq 2$.

Now, as $\cC$ is an optimum distance full flag code, by means of Theorem \ref{theo:equidistant MFD}, its $(k+1)$-projected code must be a partial spread of $\bbF_q^n$ with cardinal $\vert \cC_{k+1} \vert =\vert \cC_{k} \vert = \frac{q^n-1}{q^k-1}$. This is a contradiction, since the size of any partial spread of dimension $k+1$ is upper bounded by $\lfloor\frac{q^n-1}{q^{k+1}-1}\rfloor  < \frac{q^n-1}{q^k-1}$ by (\ref{eq:cota partial spread}).
\end{proof}

%
%
%

As a consequence of Proposition \ref{prop:MFD_full_flag_implies_n=2k},  for $n\neq 3$, if there exist optimum distance full flag codes on $\bbF_q^n$ with a $k$-spread as a $k$-projected code, the dimension $n$ must be equal to $2k$. In the following subsection we give the precise construction of an optimum distance full flag code $\cC$ on $\bbF_q^{2k}$ such that $\cC_k$ is a $k$-spread.

\begin{remark}
Notice that for $n=3$ the only admissible situation for conditions of Proposition $\ref{prop:MFD_full_flag_implies_n=2k}$ corresponds to $k=1$, since $k<n$. In this case, full flags consist of nested lines and planes. This is a particular case of a much more general study that we will address in a forthcoming paper (see Section $5$).

\end{remark}

\subsection{Our Construction}\label{subsec:construction}
Throughout this section, we will assume that $n=2k$ and $k\geq 2$. Let $\mathcal{S}$ be a {\em planar spread} of $\mathbb{F}_q^{2k}$, that is, a spread $\cS \subseteq \mathcal{G}_q(k,2k)$. In particular, we have that $\cS$ has cardinality $|\cS|=q^k+1$ and we can write  $\mathcal{S}=\{\mathcal{S}_1,\ldots, \mathcal{S}_{q^k+1}\}$. The subspaces $\cS_i$ satisfy 
\begin{equation}\label{eq:planars}
\mathcal{S}_i\cap \mathcal{S}_j=\{0\}  \mbox{ and } \mathcal{S}_i + \mathcal{S}_j=\mathbb{F}_q^{2k}, 
\end{equation}
whenever $i\neq j$.

Now, for every $i \in \{1,...,q^k+1\}$, we consider $\mathrm{S}_i$ a {\em generator matrix} of the $k$ dimensional subspace $\mathcal{S}_i$. That means that $\mathrm{S}_i$ is a full-rank $(k\times 2k)$-matrix such that $\mathcal{S}_i=\mathrm{rowsp}(\mathrm{S}_i)$. Fixed an index $j\leq k$, we will denote by $\mathrm{S}_i^{(j)}$ the submatrix of $\mathrm{S}_i$ given by the first $j$ rows of $\mathrm{S}_i$. 


Since $\cC$ is a planar spread, any matrix of the form $\begin{pmatrix} 
 \mathrm{S}_{i_1}\\
 \mathrm{S}_{i_2}
\end{pmatrix}$ is a $2k \times 2k$ full rank matrix for $i_1 \neq i_2$. In particular, if we denote 
\begin{equation}\label{eq:matricesW_i}
 \mathrm{W}_i=\left(\begin{array}{lc}
 \mathrm{S}_{i}\\
 \mathrm{S}_{i+1}
\end{array}\right) \textrm{ for } i=1,\ldots ,q^k \textrm{ and }  \mathrm{W}_{q^k+1}=\left(\begin{array}{lc} 
 \mathrm{S}_{q^k+1}\\
 \mathrm{S}_{1}
\end{array}\right),
\end{equation}
we have that all the matrices $ \mathrm{W}_i$ are $2k \times 2k$ full rank matrices. Moreover, $ \mathrm{W}_i^{(k)}=S_i$ for any $i=1,...,q^k+1$. We will denote by 
\[
\cW_i^{(j)}= \mathrm{rowsp}(\mathrm{W}_i^{(j)})
\]
the subspace of dimension $j$ generated by the first $j$ rows of $ \mathrm{W}_i$.
Notice that, given $i \in\{1,...,q^k+1\}$ and $1\leq j_1<j_2\leq 2k$, it holds that
$
\cW_i^{(j_1)}\subsetneq \cW_i^{(j_2)}.
$
As a consequence, given a matrix $\mathrm{W}_i$ as above, we can define the {\em full flag associated to $ \mathrm{W}_i$} as
\begin{equation}\label{eq:flag generated}
\cF_{ \mathrm{W}_i}=(\cW_i^{(1)},\ldots,\cW_i^{(2k-1)}).
\end{equation}
Finally, we define the {\em full flag code associated to the matrices $\{ \mathrm{W}_i\}_{i=1}^{q^k+1}$} as
\begin{equation}\label{eq:our code}
\cC=\{\cF_{ \mathrm{W}_i}\ |\ 1\leq i\leq q^k+1\}.
\end{equation}
It follows that the $j$-projected code of $\cC$ is 
\begin{equation}\label{eq:our projected code}
\cC_j=\{\cW_i^{(j)}\ |\ 1\leq i\leq q^k+1\}.
\end{equation}

Our aim is to prove that the code $\cC$ defined in (\ref{eq:our code}) is an optimum distance full flag code  such that $\cC_k=\cS$ (see Theorem \ref{Teorema flag-code}). To do this, we proceed in two steps: first we show that $\cC_j$ is a partial spread, for each $1\leq j\leq k$ and, secondly, we will show that $\cC_j$ is an equidistant $2(j-k)$-intersecting code, for each $k<j<2k$.




\begin{proposition}\label{prop:construccion 1y2 juntas}
Let $\mathcal{S}$ be a $k$-spread of $\mathbb{F}_q^{2k}$. Using the previous notation, for all $1\leq j\leq 2k-1$, the set 
\[
\mathcal{C}_j=\left\lbrace \cW_i^{(j)}\ |\  \ i=1, \ldots, q^k+1\right\rbrace,
\]
is a constant dimension code of $\mathcal{G}_q(j,2k)$ with cardinality $\vert\mathcal{C}_j\vert=\vert \mathcal{S}\vert$. Moreover,
\begin{enumerate}
  \item  $\cC_j$ is a partial spread, for $1\leq j\leq k$, and
  \item $\cC_j$ is equidistant $2(j-k)$-intersecting, for $k<j\leq 2k-1$. So, its distance is $2(2k-j)$, that is the maximum possible distance between a pair of $j$-dimensional subspaces in $\bbF_q^{2k}$.
\end{enumerate}
\end{proposition}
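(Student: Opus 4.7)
The proof splits naturally along the threshold $j=k$, using only two facts about the planar spread: for $i\neq l$, $\mathcal{S}_i\cap\mathcal{S}_l=\{0\}$ and $\mathcal{S}_i+\mathcal{S}_l=\mathbb{F}_q^{2k}$. First I would observe that $\mathcal{W}_i^{(j)}$ really has dimension $j$: since the spread is planar, each $\mathrm{W}_i$ is a full-rank $2k\times 2k$ matrix, so any prefix of its rows is linearly independent, and by construction $\mathcal{W}_i^{(k)}=\mathrm{rowsp}(\mathrm{S}_i)=\mathcal{S}_i$.

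For part (1), the case $1\leq j\leq k$: each $\mathcal{W}_i^{(j)}$ sits inside $\mathcal{W}_i^{(k)}=\mathcal{S}_i$. For $i\neq l$, the trivial intersection $\mathcal{S}_i\cap\mathcal{S}_l=\{0\}$ forces
\[
\mathcal{W}_i^{(j)}\cap\mathcal{W}_l^{(j)}\subseteq\mathcal{S}_i\cap\mathcal{S}_l=\{0\},
\]
so the codewords are pairwise trivially intersecting. This simultaneously shows they are pairwise distinct (hence $|\mathcal{C}_j|=q^k+1=|\mathcal{S}|$) and that $d_S(\mathcal{C}_j)=2j$, i.e.\ $\mathcal{C}_j$ is a partial spread.

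For part (2), the case $k<j\leq 2k-1$: now the inclusion reverses, $\mathcal{W}_i^{(j)}\supseteq\mathcal{W}_i^{(k)}=\mathcal{S}_i$. For $i\neq l$, the complementary property of a planar spread gives
\[
\mathcal{W}_i^{(j)}+\mathcal{W}_l^{(j)}\supseteq\mathcal{S}_i+\mathcal{S}_l=\mathbb{F}_q^{2k},
\]
so the sum has dimension $2k$. The Grassmann dimension formula then pins down
\[
\dim(\mathcal{W}_i^{(j)}\cap\mathcal{W}_l^{(j)})=2j-2k=2(j-k),
\]
uniformly in the pair $(i,l)$. Thus $\mathcal{C}_j$ is equidistant $2(j-k)$-intersecting, with subspace distance $2\bigl(j-2(j-k)\bigr)=2(2k-j)$, which is indeed the maximum possible between $j$-subspaces of $\mathbb{F}_q^{2k}$ when $j>k$. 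Distinctness (and hence $|\mathcal{C}_j|=q^k+1$) follows since $2(j-k)<j$ when $j<2k$, so the intersections are proper.

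There is no real obstacle here: once the planar-spread axioms and the fact that each $\mathrm{W}_i$ has $\mathrm{S}_i$ as its top block are invoked, the two cases mirror each other, with the inclusion $\mathcal{W}_i^{(j)}\subseteq\mathcal{S}_i$ (resp.\ $\mathcal{S}_i\subseteq\mathcal{W}_i^{(j)}$) exploiting the ``$\cap=0$'' (resp.\ ``$+=\mathbb{F}_q^{2k}$'') half of the planar spread property. The only mild subtlety is noticing that the top block of every $\mathrm{W}_i$ in (\ref{eq:matricesW_i}) is $\mathrm{S}_i$ itself, so distinct indices $i\neq l$ always produce distinct spread elements $\mathcal{S}_i\neq\mathcal{S}_l$ at the top, which is what makes the Grassmann computation in part (2) uniform across all pairs.
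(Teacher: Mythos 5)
Your proof is correct and follows essentially the same route as the paper: full rank of $\mathrm{W}_i$ gives $\dim\mathcal{W}_i^{(j)}=j$, the inclusion $\mathcal{W}_i^{(j)}\subseteq\mathcal{S}_i$ plus $\mathcal{S}_i\cap\mathcal{S}_l=\{0\}$ handles $j\leq k$, and the sum-dimension computation plus Grassmann's formula handles $j>k$. The only cosmetic difference is that where the paper computes $\dim(\mathcal{W}_i^{(j)}+\mathcal{W}_l^{(j)})=2k$ via a matrix-rank argument on the stacked generator blocks, you obtain it directly from the subspace inclusion $\mathcal{W}_i^{(j)}+\mathcal{W}_l^{(j)}\supseteq\mathcal{S}_i+\mathcal{S}_l=\mathbb{F}_q^{2k}$, which is a slightly cleaner phrasing of the same fact.
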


\begin{proof} 
Since the matrices  $\{ \mathrm{W}_i\}_{i=1}^{q^k+1}$ defined in (\ref{eq:matricesW_i}) have full rank, the first $j$ rows of  $\mathrm{W}_i$ are linearly independent for all $1\leq j\leq 2k-1$. Hence, $\cW_i^{(j)}$ is a $j$-dimensional subspace of $\bbF_q^{2k}$ and $\mathcal{C}_j$ is a constant dimension code of $\mathcal{G}_q(j,2k)$, for every $1\leq j\leq 2k-1$.

Let us consider two different subspaces $\cW_i^{(j)}$  and $\cW_l^{(j)}$ in $\cC_j$, with $i, l \in\{1,\ldots ,q^k+1\}$, $i\neq l$.
For $1\leq j\leq k$,  it holds that 
$$
\cW_i^{(j)} \cap \cW_l^{(j)}=\mathrm{rowsp}(\mathrm{S}_i^{(j)}) \cap \mathrm{rowsp}(\mathrm{S}_l^{(j)})\subseteq \mathcal{S}_i \cap \mathcal{S}_l= \left\lbrace 0 \right\rbrace. 
$$
As a consequence,  we have $d_S(\mathcal{C}_j)=2j$ and $\cC_j$ is a partial spread satisfying $\vert\mathcal{C}_j\vert =\vert \mathcal{S}\vert$.
Now, for $k<j\leq 2k-1$, the sum subspace of $\cW_i^{(j)}$  and $\cW_l^{(j)}$ has dimension
\[
\dim\left( \mathcal{W}_i^{(j)} + \mathcal{W}_l^{(j)}\right)=
\mathrm{rk}
\left(
\begin{array}{l}
\mathrm{S}_i\\
\mathrm{S}_{i+1}^{(j-k)}\\
\mathrm{S}_l\\
\mathrm{S}_{l+1}^{(j-k)}
\end{array}
\right)
=\mathrm{rk}\begin{pmatrix}
\mathrm{S}_i\\
\mathrm{S}_l
\end{pmatrix}
=2k.
\]
Besides, we have that
$$
\dim\left( \mathcal{W}_i^{(j)} + \mathcal{W}_l^{(j)}\right)= 2j- \dim \left( \mathcal{W}_i^{(j)} \cap \mathcal{W}_l^{(j)}\right).
$$
As a consequence, one has that $\dim \left( \mathcal{W}_i^{(j)} \cap \mathcal{W}_l^{(j)}\right)=2(j-k)$. 
We conclude that $\mathcal{C}_{j}$ is an equidistant $2(j-k)$-intersecting code with subspace distance  $d_S(\mathcal{C}_{j})=2(j-2(j-k))=2(2k-j)$ and size $\vert\mathcal{C}_j\vert =\vert \mathcal{S}\vert$.

\end{proof}

From the  previous proposition we can directly conclude the main result in this section:

\begin{theorem}\label{Teorema flag-code}
Let $\mathcal{S}$ be a $k$-spread of $\mathbb{F}_q^{2k}$ with generator matrices $\{ \mathrm{S_i}\}_{i=1}^{q^k+1}$. Consider the matrices $\{ \mathrm{W}_i\}_{i=1}^{q^k+1}$ defined in $(\ref{eq:matricesW_i})$. Then, the set of full flags associated to these matrices, i.e.,  
\[
\mathcal{C}=\{\cF_{ \mathrm{W}_i}\ |\  i=1,\ldots,q^k+1 \},
\] 
is an optimum distance full flag code with distance $2k^2$ and size $|\mathcal{C}|=|\mathcal{S}|=q^k+1$.
\end{theorem}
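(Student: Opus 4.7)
The plan is to derive the theorem as a direct consequence of the characterization of optimum distance flag codes given in Theorem \ref{theo:equidistant MFD} together with the detailed information about the projected codes gathered in Proposition \ref{prop:construccion 1y2 juntas}. In other words, I will not attempt a fresh computation; instead I will verify the two hypotheses of Theorem \ref{theo:equidistant MFD}(ii), namely (a) every projected code $\cC_j$ attains the maximum possible subspace distance in $\cG_q(j,2k)$, and (b) the code $\cC$ is disjoint, i.e., $|\cC_1|=\cdots=|\cC_{2k-1}|=|\cC|$.

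First, the distance condition (a) is already established by Proposition \ref{prop:construccion 1y2 juntas}: for $1\leq j\leq k$ each $\cC_j$ is a partial spread, so $d_S(\cC_j)=2j=\min\{2j,2(2k-j)\}$; and for $k<j\leq 2k-1$ each $\cC_j$ is equidistant $2(j-k)$-intersecting, so $d_S(\cC_j)=2(2k-j)=\min\{2j,2(2k-j)\}$. For the disjointness condition (b), that same proposition yields $|\cC_j|=|\cS|=q^k+1$ for every $j\in\{1,\ldots,2k-1\}$; since each flag $\cF_{\mathrm{W}_i}\in\cC$ contributes exactly one subspace to each projected code, we get $|\cC|\geq q^k+1$, while $|\cC|\leq q^k+1$ by construction, so $|\cC|=|\cC_j|=q^k+1$ and $\cC$ is disjoint.

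With (a) and (b) in hand, Theorem \ref{theo:equidistant MFD} immediately gives that $\cC$ is an optimum distance full flag code. The value of its flag distance is then the bound in Lemma \ref{lem:Distancia maxima flags} specialised to a full flag code on $\bbF_q^n$ with $n=2k$ even, which equals $n^2/2=2k^2$. The size $|\cC|=q^k+1=|\cS|$ has already been computed.

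The main potential obstacle is making sure that the equality $|\cC|=q^k+1$ really holds; a priori, two different indices $i\neq i'$ could yield the same flag $\cF_{\mathrm{W}_i}=\cF_{\mathrm{W}_{i'}}$ and thereby collapse the cardinality. This worry is dissolved by comparing the $k$-th subspaces: by construction $\cW_i^{(k)}=\mathrm{rowsp}(\mathrm{S}_i)=\cS_i$, so distinct indices produce distinct members of the spread $\cS$ at level $k$, hence distinct flags. Once this small check is made, the result follows essentially by bookkeeping from the previous proposition and theorem, and no further calculation is required.
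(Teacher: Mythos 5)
Your proof is correct and follows essentially the same route as the paper: it invokes Proposition \ref{prop:construccion 1y2 juntas} to establish that each projected code has maximum subspace distance and cardinality $q^k+1$, and then applies the characterization of Theorem \ref{theo:equidistant MFD} together with Lemma \ref{lem:Distancia maxima flags} to conclude. Your extra observation that distinct indices yield distinct flags (via the $k$-th subspaces $\cW_i^{(k)}=\cS_i$) is a worthwhile explicit check that the paper leaves implicit, but it does not change the argument.
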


\begin{proof}
By  Proposition \ref{prop:construccion 1y2 juntas}, we have that the projected subspaces of $\mathcal{C}$ satisfy the following:
\begin{enumerate}
\item $\mathcal{C}_j$ is a partial spread of $\mathcal{G}_q(j,2k)$, for $j=1,\ldots, k-1$,
\item $\mathcal{C}_k=\mathcal{S}$,
\item $\mathcal{C}_{j}$ is an equidistant $2(j-k)$-intersecting code of $\mathcal{G}_q(j,2k)$, for $j=k+1,\ldots, 2k-1$.
\item $\vert\mathcal{C}_1\vert = \cdots = \vert\mathcal{C}_k\vert = \cdots = \vert\mathcal{C}_{2k-1}\vert = \vert\mathcal{C}\vert$.
\end{enumerate}
Therefore, $\mathcal{C}$ is a disjoint flag code such that every projected code attains the maximum possible subspace distance. By means of Proposition \ref{theo:equidistant MFD}, we conclude that $\mathcal{C}$  is an optimum distance flag code. Since $\cC$ is a full flag code, by Lemma \ref{lem:Distancia maxima flags}, it follows that $d_f(\cC)=2k^2$.
\end{proof}

Once we have constructed the optimum distance full flag code given in Theorem \ref{Teorema flag-code}, we wonder if there exist other optimum distance full flag codes on $\bbF_q^{2k}$, not necessarily having a spread as $k$-projected code, with cardinality higher than $q^k+1$. Next, we show that this cardinality cannot be improved and hence, the flag code given by our construction has also the maximum possible size among optimum distance full flag codes on $\bbF_q^{2k}.$

\begin{theorem}\label{Teorema Maximalidad}
Let $\cC$ be an optimum distance full flag code on $\bbF_q^{2k}$. Then  $|\cC| \leq q^k+1$. The equality holds if, and only if, the $k$-projected code of $\cC$ is a planar spread of $\bbF_q^{2k}.$
\end{theorem}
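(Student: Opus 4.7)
The plan is to reduce the question to a statement about the $k$-projected code $\cC_k$ and then invoke the partial spread bound (\ref{eq:cota partial spread}) together with the characterization of optimum distance flag codes given in Theorem \ref{theo:equidistant MFD}.

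First, I would use Theorem \ref{theo:equidistant MFD}: since $\cC$ is an optimum distance full flag code, it is disjoint, which gives $|\cC| = |\cC_k|$. Moreover, each projected code attains its maximum possible subspace distance. For $t_k = k$ on $\bbF_q^{2k}$ the value $\min\{2k, 2(n-k)\} = 2k$ is the maximum, so $\cC_k$ is a constant dimension code in $\cG_q(k,2k)$ with subspace distance $2k$, i.e.\ a partial spread.

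Next I would apply the partial spread bound (\ref{eq:cota partial spread}) with $n=2k$:
\[
|\cC_k| \;\leq\; \left\lfloor \frac{q^{2k}-1}{q^k-1} \right\rfloor \;=\; q^k+1,
\]
since $q^{2k}-1=(q^k-1)(q^k+1)$ and so the quotient is an integer. Combining with $|\cC|=|\cC_k|$ yields $|\cC|\leq q^k+1$, which is the claimed bound.

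For the characterization of equality, if $|\cC| = q^k+1$ then $\cC_k$ is a partial spread of $\cG_q(k,2k)$ attaining the bound $q^k+1$; since $k$ divides $2k$, such an optimal partial spread is precisely a $k$-spread of $\bbF_q^{2k}$, i.e.\ a planar spread. Conversely, if $\cC_k$ is a planar spread, then $|\cC_k|=q^k+1$, and because $\cC$ is disjoint (by Theorem \ref{theo:equidistant MFD} applied to the hypothesis of optimum distance) we obtain $|\cC|=|\cC_k|=q^k+1$. There is no real obstacle here: the only subtlety is noting that the generic partial spread upper bound becomes exact when $k\mid n$, which is exactly our situation with $n=2k$.
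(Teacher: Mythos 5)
Your proof is correct and follows essentially the same route as the paper's: apply Theorem \ref{theo:equidistant MFD} to get disjointness (hence $|\cC|=|\cC_k|$) and that $\cC_k$ is a partial spread, then invoke the bound (\ref{eq:cota partial spread}) with $n=2k$, and observe that equality forces $\cC_k$ to be a spread. The only difference is that you spell out the converse direction of the equality statement explicitly, which the paper leaves implicit.
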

\begin{proof}
Let ${\mathcal{C}}$ be an optimum distance full flag code on $\mathbb{F}_q^{2k}$ and consider its $k$-projected code ${\mathcal{C}}_k \subset \cG_q(k,2k)$. 
By Theorem \ref{theo:equidistant MFD}, we know that ${\cC}$ is a disjoint flag code and its projected codes have maximum distances. In particular, $\vert {\cC} \vert = \vert {\cC}_k \vert$ and ${\cC}_k$ is a partial spread of $\bbF_q^{2k}$. Hence, $\vert {\cC} \vert = \vert {\cC}_k \vert \leq \frac{q^{2k}-1}{q^k-1}=q^k+1$ and, the equality holds only if ${\cC}_k$ is a spread code.
\end{proof}

To present an example of optimum distance full flag code on $\bbF_q^{2k}$ having a $k$-spread as a $k$-projected code, it is enough to choose a family of generator matrices of a given planar spread. We use the spread constructed in  \cite{MangaGorlaRosen08}.
\begin{example}
Let $\mathrm{M}$ denote the companion matrix of a monic, primitive polynomial of degree $k$ and coefficients on $\bbF_q$. The set $\mathcal{S}\subset \mathcal{G}_q(k,2k)$, with elements generated by matrices
\[
\begin{array}{ccl}
 \mathrm{S}_i & = & \left[\mathrm{I}_k \vert \mathrm{M}^i\right], \ i= 1, \ldots, o(\mathrm{M})=q^k-1,\\
 \mathrm{S}_{q^k} & = & \left[\mathrm{I}_k \vert \,  \mathrm{0}\,  \right],\\
 \mathrm{S}_{q^k+1} & = & \left[\, \mathrm{0} \, \vert \mathrm{I}_k\right],
\end{array}
\]
is a planar spread of $\bbF_q^{2k}$ (see \cite{MangaGorlaRosen08}). Now, following the construction provided by Theorem \ref{Teorema flag-code},  from the planar spread $\mathcal{S}$ we can obtain an optimum distance full flag code of cardinality $q^k+1$.
\end{example}

\begin{remark}
In \cite{LiebNebeVaz18} the authors present several constructions of flag codes as orbits of group actions. None of that constructions attain the bound given in (\ref{eq:quotamaxdistflag}) and therefore none of them are optimum distance flag codes. 
\end{remark}


\subsection{A decoding process on the erasure channel} \label{subsec:decoding}

We fix $\mathcal{C}$ a full flag code from a given planar spread defined as in Theorem \ref{Teorema flag-code} as our error correcting flag code and we propose a decoding algorithm on the erasure channel. 
In the general setting of error correcting models based on minimum distance, it is well known that any code  with minimum distance $d$ can detect up to $d-1$ errors and can correct, at most, $\lfloor \frac{d-1}{2}\rfloor$ of them. In particular, as stated in Theorem \ref{Teorema flag-code}, our flag code $\mathcal{C}$ has flag distance $d_f(\mathcal{C})=2k^2$, so it can detect up to $2k^2-1$ errors and correct at most $\left\lfloor\frac{2k^2-1}{2}\right\rfloor=k^2-1$ errors. Moreover, if we consider its projected codes $\mathcal{C}_i$ as independent subspace codes, their respective error-correction capabilities are $\left\lfloor \frac{d_S(\mathcal{C}_i)-1}{2}\right\rfloor=\frac{d_S(\mathcal{C}_i)}{2}-1$ for $i=1,\ldots, 2k-1$.

Now, assume we have sent a flag $\cF=(\cF_{1},\ldots, \cF_{2k-1}) \in \cC$ and the receiver has obtained a sequence of subspaces $\cX= (\cX_1, \ldots , \cX_{2k-1})$. We denote the {\em total error} of the communication by 
\[
e=d_f(\cF,\cX)=\sum_{i=1}^{2k-1} d_S(\mathcal{F}_i, \cX_i).
\]
Moreover,  the {\em $i$-th (shot) subspace error}  will be denoted by $e_i=d_S(\mathcal{F}_i, \cX_i)$. 

As $\cC$ is an optimum distance flag code, in particular $d_{f}(\cC)= \sum_{i=1}^{2k-1}d_S(\mathcal{C}_i)$ by Theorem \ref{theo:equidistant MFD}. Based on this property, in the following result we prove that if the total error $e$ is correctable by our flag code $\cC$, that is, $e\leq k^2-1$, at least one of the associated subspace errors must be also correctable. As a consequence, at least one of the received subspaces could be decoded by minimum distance in the corresponding projected code.

\begin{proposition}\label{AlMenosUnoDecodificaBien}
Assume that $e$ is a correctable total error. Then there exists at least one index $i\in\{1,\ldots, q^{k}+1\}$ such that the $i$-th subspace error $e_i$ is correctable.
\end{proposition}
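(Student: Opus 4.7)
The plan is to argue by contradiction using a pigeonhole on the additive decomposition of the flag distance. The starting point is that, because $\mathcal{C}$ is an optimum distance full flag code obtained from the construction of Theorem \ref{Teorema flag-code}, Theorem \ref{theo:equidistant MFD} together with Proposition \ref{prop:construccion 1y2 juntas} fixes the projected minimum distances exactly: $d_S(\mathcal{C}_i)=2i$ for $1\leq i\leq k$ and $d_S(\mathcal{C}_i)=2(2k-i)$ for $k\leq i\leq 2k-1$, and in particular $d_f(\mathcal{C})=\sum_{i=1}^{2k-1}d_S(\mathcal{C}_i)=2k^2$. Since every $d_S(\mathcal{C}_i)$ is an even integer, the one-shot correction capacity in $\mathcal{C}_i$ is precisely $d_S(\mathcal{C}_i)/2-1$, and correctability of $e_i$ amounts to $e_i<d_S(\mathcal{C}_i)/2$.

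Assume for contradiction that no shot error $e_i$ is correctable. Then for every index $i\in\{1,\ldots,2k-1\}$ one has $e_i\geq d_S(\mathcal{C}_i)/2$. Summing these inequalities gives
\[
e=\sum_{i=1}^{2k-1}e_i\geq \sum_{i=1}^{2k-1}\frac{d_S(\mathcal{C}_i)}{2}=\frac{d_f(\mathcal{C})}{2}=k^2,
\]
which contradicts the assumption that $e$ is correctable, i.e.\ $e\leq k^2-1$. Hence there exists at least one shot index $i$ with $e_i<d_S(\mathcal{C}_i)/2$, and the corresponding $i$-th subspace error can be corrected by minimum-distance decoding inside the constant dimension code $\mathcal{C}_i$.

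I do not foresee any genuine obstacle here: the argument is a pigeonhole whose validity rests on the additivity $d_f(\mathcal{C})=\sum_{i=1}^{2k-1}d_S(\mathcal{C}_i)$, which is exactly what the optimum-distance hypothesis buys us through Theorem \ref{theo:equidistant MFD}. The only point worth recording explicitly is the evenness of each projected distance, needed to rewrite $\lfloor (d_S(\mathcal{C}_i)-1)/2\rfloor$ as $d_S(\mathcal{C}_i)/2-1$ without a floor; this is guaranteed by the exact values listed in Proposition \ref{prop:construccion 1y2 juntas}.
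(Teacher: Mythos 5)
Your proof is correct and follows essentially the same contradiction/pigeonhole argument as the paper, resting on the additivity $d_f(\mathcal{C})=\sum_i d_S(\mathcal{C}_i)$ given by Theorem \ref{theo:equidistant MFD}. The only addition is your explicit remark about evenness of each $d_S(\mathcal{C}_i)$, which the paper uses implicitly when passing from $e_i > d_S(\mathcal{C}_i)/2 - 1$ to $e_i \geq d_S(\mathcal{C}_i)/2$; you also correctly treat the index $i$ as running over $\{1,\ldots,2k-1\}$, which is what the proposition means (the range $\{1,\ldots,q^k+1\}$ in the statement is a typo).
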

\begin{proof} Suppose that none of the subspace errors is correctable. That means that  $e_i > \frac{d_S(\mathcal{C}_i)}{2}-1$ for all $i$. Hence, we have that $e_i \geq \frac{d_S(\mathcal{C}_i)}{2}$. By Theorem \ref{theo:equidistant MFD}, if we compute the total error, we get
$$
e=\sum_{i=1}^{2k-1} e_i \geq \sum_{i=1}^{2k-1} \frac{d_S(\mathcal{C}_i)}{2}= \frac{d_f(\mathcal{C})}{2}=k^2 > k^2-1.
$$ This is a contradiction, since $e$ is correctable.
\end{proof}

Notice that Proposition \ref{AlMenosUnoDecodificaBien}  still holds true in the more general setting of multishot codes in which the extended distance of the code is the sum of the subspace distances of the codes used at each shot.  In addition to this property, the nested structure of our flag code will be useful in the decoding process as we will see next.

\subsubsection*{The erasure channel model}

Our code $\mathcal{C}$ is, in particular, a multishot subspace code \cite{NobUcho09} and we will use the subspace channel at most $2k-1$ times.  Nevertheless, since we are working with flags, we use the general idea of the channel model proposed in \cite{LiebNebeVaz18} in order to take advantage of the nested subspaces we have. As we will see later, this fact improves the error-correction capability of the code, as it was already suggested in \cite{NobUcho10}.

The network is modelled as a finite directed acyclic multigraph with a single source and possibly multiple receivers.  The source and the receivers agree in some set of flags, our code $\cC$ in this case, and the information is encoded as a flag in $\cC$. As we assume we work in an erasure channel, erasures are allowed during the transmission process but there are no errors at any step.

Suppose we want to send the full flag $\cF \in \mathcal{C}$, associated to the matrix $\mathrm{W}_s$, for some $s\in\{1,\ldots, q^{k}+1\}$. Recall that if $\cF_{i}$ is the $i$-subspace of $\cF$, according to (\ref{eq:flag generated}), we have that $\cF_{i}=\cW_s^{(i)}=\rsp(\mathrm{W}_s^{(i)})$, for $i=1, \ldots, 2k-1$.  In principle, the source sends this full flag $\cF$ in $2k-1$ shots. Nevertheless, as we will see in Propositions \ref{DecodingBeforek} and \ref{DecodingAfterk}, we usually could get $\cF$ in  a significantly fewer number of shots. 

Next we describe the general process of transmission. At the $i$-th shot:

\begin{itemize}
\item Through every outgoing edge, the source sends the $i$-th row of the generator matrix of  $\cF_{i}$, that is, $\mathrm{W}_s^{(i)}.$
\item Then, every intermediate node constructs random linear combinations of the vectors that it has received up to this point and sends each of them through an outgoing edge.
\item The receiver obtains many (say $a_i$) random linear combinations of the rows of  $\mathrm{W}_s^{(i)}$ and gets a matrix $Z_i=Y_i\mathrm{W}_s^{(i)}$, where $Y_i$ is a $(a_i \times i)$-matrix. The receiver gathers the matrices $Z_1,\ldots ,Z_i$ received until this moment and defines the subspace

 
\begin{equation}\label{eq:channel}
\cX_i=\mathrm{rowsp}
\begin{pmatrix}
Z_1\\
\vdots\\
Z_i
\end{pmatrix} \subseteq \cF_{i}.
\end{equation}

\item Finally, at the last shot, i.e., the $(2k-1)$-th shot, the sequence of nested subspaces $\cX= (\cX_1, \ldots , \cX_{2k-1})$ is received.
\end{itemize}

Notice that if for an index $i \in \{1,...,2k-1\}$, the last column of $Y_j$ is not null for all $j \leq i$, then $\cX_i=\cF_{i}$. If this happens at any shot, then  $\cX=\cF$. Otherwise, erasures have occurred during the transmission process and $\cX$ is an {\em stuttering flag}, that is, a sequence of nested subspaces  where equalities are allowed (see \cite{LiebNebeVaz18}). Nevertheless, we always have that $\cX_i\subseteq \cF_{i}$ for all $i$.

\begin{remark}
 Observe that this channel model takes advantage of the nested structure of flags in order to reduce the number of erasures that occur in the communication process. If we send a flag as a codeword of a multishot code, without regarding the nested structure of its subspaces, at the $i$-th shot, the receiver gets a matrix $Z_i=Y_i\mathrm{W}_s^{(i)}$ and constructs a subspace $\bar{\cX}_i=\rsp(Z_i).$ After $2k-1$ shots, the receiver has obtained a sequence of subspaces $\bar{\cX}=(\bar{\cX}_1, \ldots, \bar{\cX}_{2k-1})$. In that case, by construction, it holds $\bar{\cX}_i \subseteq \cX_i \subseteq \cF_i$ and then, 
$$
\begin{array}{rcl}
d_S(\cF_i, \cX_i) & = &     \dim(\cF_i + \cX_i)-\dim(\cF_i \cap  \cX_i)\\
                  & = &     \dim(\cF_i) -\dim(\cX_i) \\
                  & \leq  & \dim(\cF_i)- \dim(\bar{\cX}_i) \\
                  & = &     d_S(\cF_i, \bar{\cX}_i).
\end{array}
$$
Hence, the stuttering flag $\cX$ is always, at least, as closer to the sent flag as the sequence of subspaces $\bar{\cX},$ in spite of the fact that the same number of erasures has occurred at each shot. So, we can say that sending flags, which have a lot of redundancy in their structure, makes possible to correct some erasures during the transmission process. Even more, although erasures occur at some shot, in some cases, the channel itself can correct them and the receiver can obtain the sent flag, as in the following example.
\end{remark}

\begin{example}
Suppose we send the flag $\cF=(\langle e_1 \rangle, \langle e_1, e_2 \rangle, \langle e_1, e_2, e_3\rangle)$ of type $(1,2,3)$ on $\bbF_q^4$, where $\lbrace e_1, e_2, e_3, e_4 \rbrace$ represents the standard basis of $\bbF_q^4$. Suppose that during the communication process, some erasures have occurred and the receiver gets matrices
$$
Z_1=\begin{pmatrix}
1 & 0 & 0 & 0
\end{pmatrix}, \
Z_2=\begin{pmatrix}
0 & 0 & 0 & 0\\
0 & 1 & 0 & 0
\end{pmatrix}, \ \text{and} \
Z_3=\begin{pmatrix}
1 & 0 & 0 & 0\\
0 & 0 & 0 & 0\\
0 & 0 & 1 & 0
\end{pmatrix}.
$$
Nevertheless, the received sequence $\cX$ coincides with the sent flag $\cF$.
\end{example}

\subsubsection*{A decoding algorithm}

Suppose that we have sent a flag $\cF=(\cF_1, \ldots, \cF_{2k-1})$ belonging to our code $\cC$ and, through an erasure channel, the receiver has obtained a stuttering flag $\cX=(\cX_1, \ldots, \cX_{2k-1}),$ defined as in (\ref{eq:channel}). Whenever the total error $e=d_f(\cX, \cF)$ is correctable by our code $\cC$, we propose a decoding algorithm based on the following results.

\begin{proposition}\label{DecodingBeforek}
If there exists $i\in \{1,\ldots, k\}$ such that the corresponding received subspace $\cX_i$ is non trivial, then we decode $\cX$ into the unique $\cF\in\cC$ such that $\cX_{i}\subseteq \cF_{i}$.
\end{proposition}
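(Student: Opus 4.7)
The plan is a short proof that hinges on two structural facts already established: the channel guarantees $\mathcal{X}_j \subseteq \mathcal{F}_j$ for every shot $j$, and for $i\leq k$ the projected code $\mathcal{C}_i$ is a partial spread (Proposition \ref{prop:construccion 1y2 juntas}, part 1), hence any two distinct elements of $\mathcal{C}_i$ meet trivially.

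First I would handle existence: the sent flag $\mathcal{F}=(\mathcal{F}_1,\ldots,\mathcal{F}_{2k-1})\in\mathcal{C}$ itself satisfies $\mathcal{X}_i\subseteq \mathcal{F}_i$ by the definition of the channel model in $(\ref{eq:channel})$, so at least one such flag in $\mathcal{C}$ exists.

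Next, for uniqueness, I would argue by contradiction. Suppose there were a flag $\mathcal{F}'=(\mathcal{F}'_1,\ldots,\mathcal{F}'_{2k-1})\in\mathcal{C}$ with $\mathcal{F}'\neq \mathcal{F}$ but $\mathcal{X}_i\subseteq \mathcal{F}'_i$. Then $\mathcal{F}_i\cap\mathcal{F}'_i\supseteq \mathcal{X}_i\neq\{0\}$. But since $i\leq k$, the $i$-projected code $\mathcal{C}_i$ is a partial spread of $\mathcal{G}_q(i,2k)$, so any two distinct elements of $\mathcal{C}_i$ intersect only in $\{0\}$. Consequently $\mathcal{F}_i=\mathcal{F}'_i$. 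Finally, because $\mathcal{C}$ is an optimum distance full flag code, Theorem \ref{theo:equidistant MFD} tells us that $\mathcal{C}$ is disjoint, i.e.\ the map $\mathcal{F}\mapsto \mathcal{F}_i$ is injective on $\mathcal{C}$; hence $\mathcal{F}=\mathcal{F}'$, a contradiction.

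There is no real obstacle here: the essence of the argument is just that non-trivial subspaces of a partial spread determine their containing block uniquely, and disjointness then lifts this uniqueness from $\mathcal{C}_i$ back to $\mathcal{C}$. The only place one must be careful is to invoke the hypothesis $\mathcal{X}_i\neq\{0\}$ (otherwise the partial spread argument is vacuous) and the hypothesis $i\leq k$ (so that $\mathcal{C}_i$ is a partial spread rather than merely an equidistant code with nonzero intersection dimension, which is the case handled by the companion Proposition \ref{DecodingAfterk}).
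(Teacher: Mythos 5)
Your proof is correct and follows essentially the same route as the paper's: both rely on $\mathcal{X}_i\subseteq\mathcal{F}_i$ from the erasure channel, on $\mathcal{C}_i$ being a partial spread for $i\leq k$ so that a nontrivial $\mathcal{X}_i$ lies in a unique codeword of $\mathcal{C}_i$, and on disjointness ($|\mathcal{C}|=|\mathcal{C}_i|$) to lift that uniqueness to the flag level. You simply phrase the uniqueness step as an explicit contradiction, which is a cosmetic rather than substantive difference.
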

\begin{proof}
Suppose that a nontrivial subspace $\cX_i$ is received for some $i\leq k$. As we are working in an erasure channel, $\cX_i$ must be contained in some subspace $\cU$ of $\mathcal{C}_i$. Moreover, since $\cC_{i}$ is a partial spread, we have that $\cU$ has to be the only subspace in $\mathcal{C}_i$ that contains $\cX_i$. Now, Theorem \ref{Teorema flag-code} states that $|\cC|=|\cC_{i}|$. Therefore, we can recover $\cF$ as the only flag in $\mathcal{C}$ such that $\cF_{i}=\cU$. 
\end{proof}

In the case not considered in the previous proposition, that is, when $\cX_1=\cdots=\cX_k=\left\lbrace 0\right\rbrace$, the following result holds.

\begin{lemma}\label{LemmaDecoding}
Assume we have received a subspace sequence $\cX= (\cX_1, \ldots , \cX_{2k-1})$ with $\cX_1=\cdots=\cX_k=\left\lbrace 0\right\rbrace$. If the total error is correctable  then, there exist an index $k < i < 2k$ such that $\dim(\cX_{i}) > 2(i-k)$.
\end{lemma}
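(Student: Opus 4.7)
The natural approach is proof by contradiction: assume that $\dim(\cX_i) \leq 2(i-k)$ for every $i$ with $k < i < 2k$, and show that the total error $e$ is then forced to exceed $k^2 - 1$, contradicting the correctability hypothesis.

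The key observation is that since we are on an erasure channel we always have $\cX_i \subseteq \cF_i$, hence the shot-by-shot error takes the particularly simple form
\[
e_i = d_S(\cF_i, \cX_i) = \dim(\cF_i) - \dim(\cX_i) = i - \dim(\cX_i).
\]
This lets us convert any upper bound on $\dim(\cX_i)$ into a lower bound on $e_i$, which is exactly what we need.

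First I would handle the first $k$ shots: by hypothesis $\cX_i = \{0\}$ for $i = 1, \ldots, k$, so $e_i = i$ and these shots already contribute
\[
\sum_{i=1}^{k} i = \frac{k(k+1)}{2}
\]
to the total error. Next, under the contradiction hypothesis $\dim(\cX_i) \leq 2(i-k)$ for $k < i < 2k$, the formula above gives $e_i \geq i - 2(i-k) = 2k - i$, and summing,
\[
\sum_{i=k+1}^{2k-1}(2k-i) = \sum_{j=1}^{k-1} j = \frac{k(k-1)}{2}.
\]
Adding the two contributions yields
\[
e \;\geq\; \frac{k(k+1)}{2} + \frac{k(k-1)}{2} \;=\; k^2,
\]
which contradicts the assumption that $e$ is correctable, since the error-correction capability of $\cC$ is $k^2 - 1$ (from Theorem \ref{Teorema flag-code} and the minimum distance bound).

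There is no real obstacle here; the proof is essentially a one-line arithmetic argument once one observes that on the erasure channel $e_i = i - \dim(\cX_i)$. The only subtlety to keep in mind is that the bound used for correctability, $e \leq k^2 - 1$, must be applied \emph{strictly}, so that the lower bound $e \geq k^2$ genuinely produces a contradiction; this is why the desired inequality is the strict one $\dim(\cX_i) > 2(i-k)$ rather than $\geq$.
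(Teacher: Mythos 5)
Your proof is correct and takes essentially the same route as the paper: argue by contradiction, use the erasure-channel fact $\cX_i \subseteq \cF_i$ to get $e_i = i - \dim(\cX_i)$, bound each $e_i$ from below, and conclude $e \geq k^2$, contradicting correctability. The only cosmetic difference is that the paper observes each lower bound equals $d_S(\cC_i)/2$ and then invokes Proposition \ref{AlMenosUnoDecodificaBien} to finish, whereas you inline that proposition's summation directly; both arrive at the same $e \geq k^2 > k^2 - 1$ contradiction.
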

\begin{proof}
Suppose that $\dim(\cX_{i})\leq 2(i-k)$, for all $i=k+1, \ldots, 2k-1$.  As we only allow erasures in the transmission, we know that $\cX_{i} \subseteq \mathcal{F}_{i}$. In this case, it holds:
$$
\begin{array}{ccccccccccc}
e_i     & = & d_S(\cX_i, \cF_i) & = & i &= &\frac{d_S(\mathcal{C}_i)}{2}, & \text{for}& i=1, \ldots, k,\\
e_{i} & = & d_S(\cX_{i}, \mathcal{F}_{i})& \geq & 2k-i &= & \frac{d_S(\mathcal{C}_{i})}{2}, & \text{for}& i=k+1, \ldots, 2k-1.
\end{array}
$$

Hence, for every $i$, the $i$-th subspace error exceeds the error-correction capability of the $i$-projected code and, as a consequence of  Proposition \ref{AlMenosUnoDecodificaBien}, the total error is not correctable, which is a contradiction.
\end{proof}

Lemma \ref{LemmaDecoding} helps us to state the following proposition. 

\begin{proposition}\label{DecodingAfterk}
 Assume that a sequence of subspaces $\cX= (\cX_1, \ldots , \cX_{2k-1})$  is received with $\cX_1=\cdots=\cX_k=\left\lbrace 0\right\rbrace$. Consider the minimum $i\in \{k+1,\ldots, 2k-1\}$ such that $\dim(\cX_{i})>2(i-k)$. Then, we can recover the sent flag $\cF$ as the only flag in $\cC$ such that $\cX_{i}$ is contained in $\cF_{i}.$
\end{proposition}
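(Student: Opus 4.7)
The plan is to use the fact that, since our code $\cC$ is disjoint, recovering $\cF$ is equivalent to identifying its $i$-th subspace $\cF_i$ in the projected code $\cC_i$. So the task reduces to showing that the received subspace $\cX_i$ is contained in a \emph{unique} element of $\cC_i$.

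First I would use the hypothesis on the channel: since only erasures occur, $\cX_i \subseteq \cF_i$, so $\cF_i$ is certainly one element of $\cC_i$ containing $\cX_i$. The heart of the argument is uniqueness, and this is where the equidistant structure of $\cC_i$ (for $k < i < 2k$) pays off. By Proposition~\ref{prop:construccion 1y2 juntas}, $\cC_i$ is equidistant $2(i-k)$-intersecting, meaning that any two distinct subspaces $\cU, \cU' \in \cC_i$ satisfy $\dim(\cU \cap \cU') = 2(i-k)$. Suppose, for contradiction, that some $\cU \in \cC_i$ with $\cU \neq \cF_i$ also contains $\cX_i$. Then
\[
\cX_i \subseteq \cF_i \cap \cU, \quad \text{so} \quad \dim(\cX_i) \leq \dim(\cF_i \cap \cU) = 2(i-k),
\]
contradicting the defining property of $i$, namely that $\dim(\cX_i) > 2(i-k)$. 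Hence $\cF_i$ is the only element of $\cC_i$ containing $\cX_i$.

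To finish, I would invoke disjointness of $\cC$: by Theorem~\ref{Teorema flag-code} (and Theorem~\ref{theo:equidistant MFD}), the projection map $\cC \to \cC_i$ sending each codeword flag to its $i$-th subspace is a bijection, so the unique element $\cF_i \in \cC_i$ just identified lifts to a unique flag in $\cC$, which is the sent flag $\cF$.

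I do not expect a serious obstacle here: the proof is essentially a two-line intersection bound followed by a cardinality remark. The only subtlety worth remarking on (but not strictly needed for the statement) is that the \emph{minimality} of $i$ is used not for correctness of the decoder but to let it terminate as early as possible; any index $i > k$ for which $\dim(\cX_i) > 2(i-k)$ would yield the same unique decoding, which is consistent with Lemma~\ref{LemmaDecoding} guaranteeing that at least one such index exists whenever the total error is correctable.
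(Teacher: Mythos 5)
Your proposal is correct and follows the same reasoning as the paper's proof: establish $\cX_i \subseteq \cF_i$ from the erasure-only channel assumption, use the equidistant $2(i-k)$-intersecting property of $\cC_i$ (from Proposition~\ref{prop:construccion 1y2 juntas}) together with $\dim(\cX_i) > 2(i-k)$ to force uniqueness of the codeword subspace containing $\cX_i$, and then lift via disjointness. Your extra remark that minimality of $i$ is for early termination rather than correctness is a nice observation and entirely consistent with the paper.
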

\begin{proof}
Let $i$ be the minimum index in $ \left\lbrace k+1, \ldots, 2k-1\right\rbrace$ such that $\dim(\cX_{i})>2(i-k)$. Recall that $\cX_{i} \subseteq \mathcal{F}_{i}$. Moreover, since $\mathcal{C}_{i}$ is an equidistant $2(i-k)$-intersecting subspace code and $\dim(\cX_{i})>2(i-k)$, no more subspace in $\cC_{i}$ than $\mathcal{F}_{i}$  can contain $\cX_{i}$. Thus, using that our code $\cC$ is disjoint, we can recover the sent flag as the only flag in $\mathcal{C}$ having $\mathcal{F}_{i}$ as a $i$-th subspace.
\end{proof}

These previous results make the following algorithm work:

\vspace{-15pt}
\begin{quote}

\hrulefill \\
\textbf{Decoding algorithm}

\vspace{-15pt}
\hrulefill

\begin{small}
\noindent\textbf{Data:} A stuttering flag $\cX=(\cX_1, \ldots, \cX_{2k-1}).$\\
\noindent\textbf{Result:} The sent flag $\cF \in \cC$.

\noindent \textbf{for} $1\leq i\leq 2k-1$
\vspace{-6pt}
\begin{itemize}
\item[]\textbf{if} $i\leq k$ and $\dim(\cX_i)>0$,
\vspace{-3pt}
\begin{itemize}
\item[]decode $\cX_i$ into the only $\cF_i \in \cC_i$ that contains $\cX_i,$
\item[]\textbf{return} the only flag $\cF \in \cC$ that has $\cF_i$ as $i$-th subspace.
\end{itemize}

\item[] \textbf{if} $i>k$ and $\dim(\cX_i) > 2(i-k),$
\vspace{-3pt}
\begin{itemize}
\item[]decode $\cX_i$ into the only $\cF_i \in \cC_i$ that contains $\cX_i,$
\item[]\textbf{return} the only flag $\cF \in \cC$ that has $\cF_i$ as $i$-th subspace.
\end{itemize}
\end{itemize}
\vspace{-18pt}
\end{small}
\hrulefill
\end{quote}


By means of Lemma $\ref{LemmaDecoding}$ one of the two previous conditions must be reached. Hence, our decoding algorithm allows the receiver to recover the sent flag in, at most, $2k-1$ uses of the channel.

\begin{remark}
Notice that our code $\cC$ does not need necessarily that the whole flag has been sent to decode it. During the transmission process, a given flag is sequentially sent and, at the $i$-th shot, the receiver gets subspaces  $\cX_1,\ldots,\cX_i$. At this moment either Propositions \ref{DecodingBeforek} or \ref{DecodingAfterk} could be applied in order to recover the sent flag. In the worst case, the receiver has to wait until the last shot. That means that, at the $(2k-1)$-th shot, it gets
$$
\cX=(0, \ldots, 0, \cX_{k+1}, \ldots, \cX_{2k-2}, \cX_{2k-1}),
$$
where $\dim(\cX_{i})\leq 2(i-k),$ for $i=k+1, \ldots, 2k-2$. In this situation, if the total error is correctable, by Lemma \ref{LemmaDecoding}, we have that $\dim(\cX_{2k-1})$ has to be $2k-1$. Thus, $\cX_{2k-1}=\cF_{2k-1}$ and, at the $(2k-1)$-th shot, we can recover $\cF$ as the only flag in $\cC$ having $\cF_{2k-1}$ as its $(2k-1)$-th subspace.
\end{remark}

%

\subsection{General type constructions}\label{subsec: general type}

We can easily derive constructions of optimum distance flag codes from the construction we provided in Theorem \ref{Teorema flag-code} for full flag codes. Next we explain these constructions.

\subsubsection*{Optimum distance flag codes of any type from planar spreads}

Our construction for optimum distance full fag codes from planar spreads of $\mathbb{F}_q^{2k}$ can be modified in order to get optimum distance flag codes of a general type vector $(t_1, \ldots, t_r)$ just by removing the projected codes of dimensions not appearing in the vector type. This procedure, that we call \emph{puncturing}, transforms any full flag $\cF=(\cF_1, \ldots, \cF_{2k-1})$ into the \emph{punctured} flag $\cF^{(t_1, \ldots, t_r)}=(\cF_{t_1},   \ldots, \cF_{t_r})$ of type $(t_1, \ldots, t_r)$. In that way, we can also define the \emph{punctured flag code} of the full flag code $\cC$ constructed in Theorem \ref{Teorema flag-code} as the set
\[
\mathcal{C}^{(t_1, \ldots, t_r)} = \lbrace \cF^{(t_1, \ldots, t_r)} \ | \ \cF \in \cC \rbrace \subseteq \cF_q((t_1, \ldots, t_r),n).  
\]
Notice that $\mathcal{C}^{(t_1, \ldots, t_r)}$ is also a disjoint flag code and its projected codes are maximum distance subspace codes, since they are also projected codes of $\cC$. Hence, by means of Theorem \ref{theo:equidistant MFD}, the punctured flag code $\mathcal{C}^{(t_1, \ldots, t_r)}$ is an optimum distance flag code of type $(t_1, \ldots, t_r)$. Besides, if dimension $k$ remains in the type vector $(t_1, \ldots, t_r)$, by arguing as in the proof of Theorem \ref{Teorema Maximalidad} we conclude that the cardinality of $\mathcal{C}^{(t_1, \ldots, t_r)}$, that is also $q^k+1$, is maximum. 



\subsubsection*{Optimum distance flag codes of type up to a divisor of $n$}

In the general universe of  $\bbF_q^{n}$, the construction of optimum distance flag codes of type $(t_1, \ldots, t_r)$, being $t_r$ a divisor of $n$, is always possible. To do so, just follow the ideas we used in the proof of the first part of Proposition \ref{prop:construccion 1y2 juntas}.  Let $\cS$ 
be a $t_r$-spread of $\bbF_q^n$ and for every $\cS_i\in\cS$ consider a $(t_r\times n)$-matrix of full rank $\mathrm{S}_i$ such that  $\cS_i= \rsp(\mathrm{S}_i)$.  For every $1 \leq j < r$, denote by $\mathrm{S}_i^{(j)}$ the matrix composed by the first $t_j$ rows of $\mathrm{S}_i$ and $\cS_i^{(j)}= \rsp(\mathrm{S}_i^{(j)})$ the corresponding vector subspace. It follows that the code
\[
\cC_j =\{ \cS_i^{(j)} \ \vert \ i=1, \ldots, \vert \cS \vert \}
\]
is a partial spread of dimension $t_j$ and cardinality $\vert \cS \vert$. Hence, by means of Theorem \ref{Teorema flag-code}, the flag code
$$
\cC =\{  (\cS_i^{(1)}, \ldots, \cS_i^{(r-1)},  \cS_i ) \ \vert \ i=1, \ldots, \vert \cS \vert \}
$$
is an optimum distance flag code of type $(t_1, \ldots, t_r)$ and has a $t_r$-spread as its last projected code.

%
%
%


\section{Conclusions and future work} \label{sec:conclusions}

In this paper, we have introduced several new concepts related to flag codes, such as projected subspace codes, disjoint flag codes and optimum distance flag codes. Besides, we have characterized optimum distance flag codes as disjoint flag codes having maximum distance constant dimension codes as projected codes.

In our search of constructions for optimum distance full flag codes, we have focused on the family of full flag codes on $\bbF_q^{n}$ having a $k$-spread as a projected code, for some divisor $k$ of $n$  and we have concluded that, except for $n=3$, these codes can only be constructed on $\bbF_q^{2k}$. For that case, we have provided a construction of optimum distance full flag codes based on the properties of planar spreads. This construction attains the maximum possible size, that is, $q^k+1$. Moreover, a decoding algorithm based on the properties of partial spreads and equidistant constant dimension codes is also given. 

In a forthcoming paper, we will  address the problem of obtaining a systematic construction of optimum distance flag codes of general type on $\bbF_q^{n}$ having a $k$-spread as a projected code, being $k$ a divisor of $n$. It would be interesting to characterize the admissible vector types  for which it is possible to generalize our model as well as to provide a specific construction. 



%
%
%
%



%
%
%

\end{document}